\newtheorem{theorem}{Theorem}
\newtheorem{corollary}{Corollary}
\def\R{\ensuremath{\mathbb{R}}}
\def\H{\ensuremath{\mathbb{H}}}
\newcommand{\dist}[2]{\ensuremath{\textrm{dist}\left(#1, #2\right)}}
\def\Vinv{\ensuremath{V^{-1}}}
\begin{document}

\title{Transversally exponentially stable Euclidean space extension technique for discrete time systems}
\date{}
\author{Soham Shanbhag\thanks{sshanbhag@kaist.ac.kr} }
\author{Dong Eui Chang\thanks{Corresponding author, dechang@kaist.ac.kr}}
\affil{School of Electrical Engineering, Korea Advanced Institute of Science and Technology, Daejeon, Republic of Korea}

\maketitle
\begin{abstract}
    We propose a modification technique for discrete time systems for exponentially fast convergence to compact sets.
    The extension technique allows us to use tools defined on Euclidean spaces to systems evolving on manifolds by modifying the dynamics of the system such that the manifold is an attractor set.
    We show the stability properties of this technique using the simulation of the rigid body rotation system on the unit sphere $S^3$.
    We also show the improvement afforded due to this technique on a Luenberger like observer designed for the rigid body rotation system on $S^3$.

    \textbf{Keywords:} Discrete time systems, transversal extension, embedding, manifolds
\end{abstract}

\section{Introduction}


With the proliferation of miniaturized microcontrollers which are discrete in nature, the study of discrete time systems in control theory is gaining importance.
An example of such a system is the rate-integrating gyroscope measuring a change in rotation $\theta_k$ between two instants of time $t = t_k$ to $t = t_{k+1}$, leading to rigid body kinematics of the form
\begin{align*}
    q(k+1) = q(k) \exp(\theta_k/2).
\end{align*}
Moreover, almost all the theory of system identification is in discrete time due to availability of inputs and outputs at every time step while information about the derivatives is rarely available.
While most of the tools for studying control systems are developed for continuous time systems, some tools for linear discrete time systems have been developed, such as the Luenberger observer\cite{Luenb1971} and the Kalman filter\cite{Kalma1960}.
Frequency domain analysis of discrete time systems is also a highly studied topic for linear discrete time systems.

However, such techniques do not extend to nonlinear discrete time systems.
Studies of nonlinear Fourier transforms are sparse, and tools like the Kalman filter are applied to nonlinear systems\cite{KalmaB1961} by linearizing the nonlinear system, which leads to divergence of the estimate from the state in case of highly nonlinear systems.
Other filters like the unscented Kalman filter\cite{JulieU2004} and the particle filter are computationally expensive.
However, some results on feedback linearization of discrete time systems have been addressed\cite{Kotta1995,FliegKN1996} and an observer has been designed for nonlinear discrete time systems\cite{MoraaG1995}.

Moreover, many such systems evolve on manifolds.
The study of such systems evolving on manifolds requires study of properties specific to the manifold, like a notion of distance on the manifold.
An alternative method for such systems is considering the manifolds as embedded in a higher dimensional Euclidean space, and considering the systems as evolving on those Euclidean spaces.
Such techniques have been used for various continuous time applications, such as designing observers for the rigid body rotation and translation system\cite{ShanbC2022a}.
These require techniques designed specifically for such nonlinear systems with the model taken into consideration.
Moreover, using techniques designed for linear discrete time systems for nonlinear discrete time systems evolving on manifolds embedded in Euclidean spaces may lead to the state of the system deviating from the manifold.

One of the methods for tackling this problem for continuous time systems is the was proposed in Feedback Integrator\cite{ChangJP2016}, where the author designs an integration scheme which allows discretization methods to appropriately estimate the state of the system by adding a transversally stable term to the system equation.
This technique has been further used for various applications in continuous time systems, like control system design and Kalman filters \cite{BuC2022,ParkPKC2021,ChangPC2020,Chang2018,K.c.S2023}.
The extension technique has also been extended to handle non-holonomic systems \cite{ChangP2019,ChangPV2022}.

In this paper, we extend the above technique to discrete time systems.
This enables use of linear discrete time techniques for nonlinear systems while respecting the underlying manifold.
This problem is different from that used in applications of \cite{ChangJP2016} where discretization has been performed, since here we study a discrete time system as compared to the discretization of a continuous time system.

\section{Main Results}

Denote by $\R$ the set of real numbers.
The set $f^{-1}(a)$ is defined as $\{x \in dom(f) \mid f(x) = a\}$, where $dom(f)$ is the domain of the map $f$.
The set $f^{-1}(U)$ is then defined as $\cup_{x \in U} f^{-1}(x)$.
The norm of $A \in \R^{n \times m}$ is denoted by $\| A \| = \sqrt{A^T A}$.
Denote the ball of radius $r$ around $x \in \R^n$ as $B_r(x) = \{ y \in \R^n \mid \| y - x \| < r\}$.
Let $\dist{A}{B}$ denote the distance between the sets $A$ and $B$ defined as $\dist{A}{B} = \inf \{ \| x - y \| \mid x \in A, y \in B\}$.
Denote by $\lambda_{max}(A)$ the largest eigenvalue of $A$.

We now present a modification scheme for discrete time systems which converts compact sets to local attractors.
The following theorem provides the main result of this paper.

\begin{theorem}\label{theo:main} 
    Consider a discrete time dynamical system on the subset $U \subset \R^n$
    \begin{align*}
        x_{k+1} = f(x_k, w_k), \quad x_k \in U, \; w_k \in W
    \end{align*}
    where $f$ is a $C^1$ map from $U \times W$ to $U$, where $W \subset \R^m$.
    We make the following assumptions:
    \begin{enumerate}[label=A\arabic*, leftmargin=3em]
        \item \label{assum:main_dyn_inv} There is a $C^2$ function $V:U \to \R_{\geq 0}$ such that $\Vinv(0)$ is compact and nonempty, and $V \circ f(x, w) = V(x)$ for all $x \in U$ and $w \in W$.
        \item \label{assum:main_converge} There exist $b, \epsilon > 0$ such that $\Vinv([0, \epsilon])$ is compact and $\left\| \nabla V(x) \right\|^2 \geq b V(x)$ for all $x \in \Vinv([0, \epsilon])$.
        \item \label{assum:main_Sdelta} There is a $\delta > 0$ such that the set $S_\delta = \{ x \in \R^n \mid \dist{\{x\}}{\Vinv([0,\epsilon])} < \delta \} \subset U$.
    \end{enumerate}
    Then, the modified dynamical system on $\Vinv([0, \epsilon])$,
    \begin{align}
        \tilde{x}_{k+1} = f(\tilde{x}_k, w_k) - \alpha\nabla V(f(\tilde{x}_k, w_k)),\label{eq:cor_extension}
    \end{align}
    converges to the set $\Vinv(0)$ exponentially fast if $\alpha$ is chosen such that $0 < \alpha < \min\left(\frac2d, \frac\delta{L}\right)$, where $L$ is the maximum value of $\| \nabla V(x) \|$ for all $x \in \Vinv([0, \epsilon])$, and $d > 0$ is chosen such that $d > \lambda_{max}(D^2 V(x))$ for all $x \in S_\delta$.
\end{theorem}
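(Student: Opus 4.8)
The plan is to use $V$ itself as a Lyapunov-type function for the modified recursion \eqref{eq:cor_extension} and to show it contracts geometrically along trajectories. Writing $y_k = f(\tilde x_k, w_k)$, assumption \ref{assum:main_dyn_inv} gives $V(y_k) = V(\tilde x_k)$, so the only change in $V$ across one step comes from the correction $-\alpha\nabla V(y_k)$. First I would expand $V(\tilde x_{k+1}) = V(y_k - \alpha\nabla V(y_k))$ by the second-order Taylor theorem, available because $V$ is $C^2$, obtaining
\begin{align*}
    V(\tilde x_{k+1}) = V(y_k) - \alpha \|\nabla V(y_k)\|^2 + \frac{\alpha^2}{2}\nabla V(y_k)^T D^2 V(\xi_k)\nabla V(y_k),
\end{align*}
where $\xi_k$ lies on the segment joining $y_k$ and $\tilde x_{k+1}$.

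The first technical point is to guarantee that every quantity above is evaluated where the hypotheses apply, and this is where both bounds on $\alpha$ enter. I would prove by induction on $k$ that $V(\tilde x_k) \le \epsilon$, the base case holding since the system is initialized in $\Vinv([0,\epsilon])$. Given $V(\tilde x_k) \le \epsilon$, invariance yields $V(y_k) = V(\tilde x_k) \le \epsilon$, so $y_k \in \Vinv([0,\epsilon])$ and $\|\nabla V(y_k)\| \le L$. The segment from $y_k$ to $\tilde x_{k+1}$ then has length $\alpha\|\nabla V(y_k)\| \le \alpha L < \delta$ by the choice $\alpha < \delta/L$, so the whole segment, and in particular $\xi_k$, lies in $S_\delta \subset U$ by \ref{assum:main_Sdelta}. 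Hence $\lambda_{max}(D^2 V(\xi_k)) < d$, the Hessian term is bounded by $\tfrac{\alpha^2 d}{2}\|\nabla V(y_k)\|^2$, and combining this with \ref{assum:main_converge} (giving $\|\nabla V(y_k)\|^2 \ge b\,V(y_k)$) yields
\begin{align*}
    V(\tilde x_{k+1}) \le \Big(1 - \alpha b\big(1 - \tfrac{\alpha d}{2}\big)\Big) V(\tilde x_k) =: \rho\, V(\tilde x_k).
\end{align*}
The bound $\alpha < 2/d$ makes $1 - \tfrac{\alpha d}{2} > 0$, so $\rho < 1$; nonnegativity of the left-hand side forces $\rho \ge 0$, hence $V(\tilde x_{k+1}) \le \rho\epsilon \le \epsilon$, which closes the induction and simultaneously gives the geometric decay $V(\tilde x_k) \le \rho^k V(\tilde x_0)$.

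Finally I would convert decay of $V$ into exponential convergence of $\tilde x_k$ to the set $\Vinv(0)$. The tool is again \ref{assum:main_converge}: on $\{0 < V \le \epsilon\}$ one has $\|\nabla\sqrt{V}\| = \|\nabla V\|/(2\sqrt{V}) \ge \sqrt{b}/2$, and following the gradient-descent curve of $\sqrt{V}$, which stays in the compact sublevel set, shows $\dist{\{x\}}{\Vinv(0)} \le \tfrac{2}{\sqrt b}\sqrt{V(x)}$ for every $x \in \Vinv([0,\epsilon])$. Applying this to $\tilde x_k$ gives $\dist{\{\tilde x_k\}}{\Vinv(0)} \le \tfrac{2}{\sqrt b}\sqrt{V(\tilde x_0)}\,\rho^{k/2}$, the claimed exponential rate.

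I expect the main obstacle to be the invariance and containment bookkeeping in the inductive step rather than the estimates themselves: one must verify, in the right order, that $\tilde x_k$ stays in $U$, that $y_k$ returns to $\Vinv([0,\epsilon])$ via \ref{assum:main_dyn_inv}, and, most delicately, that the Taylor remainder point $\xi_k$ never leaves $S_\delta$, since only there is the Hessian bound $\lambda_{max}(D^2 V) < d$ available. The two constraints on $\alpha$ are precisely what make this close up: $\alpha < \delta/L$ keeps the correction step and $\xi_k$ inside $S_\delta$, while $\alpha < 2/d$ makes the contraction factor strictly less than one.
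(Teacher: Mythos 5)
Your proposal is correct and follows essentially the same route as the paper's proof: a second-order Taylor expansion of $V$ at $f(\tilde x_k,w_k)$ with Lagrange remainder, the bound $\alpha L<\delta$ to keep the remainder point in $S_\delta$ where the Hessian bound $d$ applies, and Assumption A2 to turn the gradient term into the contraction factor $1-\alpha b(1-\alpha d/2)$, which matches the paper's $1-bc$. Your two additions --- the explicit induction showing $V(\tilde x_k)\le\epsilon$ so the sublevel set is forward invariant, and the final gradient-flow estimate $\dist{\{x\}}{\Vinv(0)}\le \tfrac{2}{\sqrt b}\sqrt{V(x)}$ converting decay of $V$ into decay of the distance --- are refinements the paper leaves implicit rather than a different argument.
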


\begin{proof}
    We first show that for all $\tilde{x}_k \in \Vinv([0, \epsilon])$, the state $\tilde{x}_{k+1} \in S_\delta$.
    Since $\Vinv([0, \epsilon])$ is compact and $V$ is a $C^2$ function, $\| \nabla V(x) \| < L$ for all $x \in \Vinv([0, \epsilon])$.
    Note that for all $\tilde{x}_k \in \Vinv([0, \epsilon])$ and $w_k \in W$, we have from Assumption~\ref{assum:main_dyn_inv} that $f(\tilde{x}_k, w_k) \in \Vinv([0, \epsilon])$.
    Hence, $\tilde{x}_{k+1} \in B_{\alpha L}(f(\tilde{x}_k, w_k)) \subset S_\delta$ since $\alpha L < \delta$.

    From equation \eqref{eq:cor_extension}, we have that
    \begin{align*}
        V(\tilde{x}_{k+1}) = V(f(\tilde{x}_k, w_k) - \alpha\nabla V(f(\tilde{x}_k, w_k))).
    \end{align*}
    For $\tilde{x}_k \in \Vinv([0, \epsilon])$, using the Lagrange form of the remainder of Taylor series expansion, and noting that  due to continuity of $V$, there exists a $c_k$ between the line joining $f(\tilde{x}_k, w_k)$ and $\tilde{x}_{k+1}$ due to Assumption~\ref{assum:main_Sdelta} such that
    \begin{align*}
        V(\tilde{x}_{k+1}) & = V(f(\tilde{x}_k, w_k)) - \alpha \big\| \nabla V(f(\tilde{x}_k, w_k)) \big\|^2 + \frac{\alpha^2}{2} \nabla V^T(f(\tilde{x}_k, w_k)) D^2 V(c_k) \nabla V(f(\tilde{x}_k, w_k))\\
        & = V(\tilde{x}_k) - \alpha \big\| \nabla V(f(\tilde{x}_k, w_k)) \big\|^2 + \frac{\alpha^2}{2} \nabla V^T(f(\tilde{x}_k, w_k)) D^2 V(c_k) \nabla V(f(\tilde{x}_k, w_k))\\
        &\leq V(\tilde{x}_k) -\alpha \big\| \nabla V(f(\tilde{x}_k, w_k)) \big\|^2 + \frac{\alpha^2}{2}\lambda_{max}(D^2 V(c_k)) \big\| \nabla V(f(\tilde{x}_k, w_k)) \big\|^2
    \end{align*}
    where we have used the property that for any $A \in \R^{n \times n}$, $x^T A x \leq \lambda_{max}(A) \| x \|^2$.
    Since $\tilde{x}_{k+1} \in B_{\alpha L}(f(\tilde{x}_k, w_k))$ implies that $c_k \in B_{\alpha L}(f(\tilde{x}_k, w_k)) \subset S_\delta$, and from the choice of $d$ we have
    \begin{align*}
        V(\tilde{x}_{k+1}) &\leq V(\tilde{x}_k) - \left(\alpha - \frac{\alpha^2}{2} d \right) \big\| \nabla V(f(\tilde{x}_k, w_k)) \big\|^2.
    \end{align*}
    Since $0 < \alpha < 2/d$, the value of $V(x_k)$ is non-increasing in $k$.
    Substituting $c := \alpha - \alpha^2 d/2 > 0$, we have
    \begin{align*}
        V(\tilde{x}_{k+1}) &\leq V(\tilde{x}_k) - c \big\| \nabla V(f(\tilde{x}_k, w_k)) \big\|^2.
    \end{align*}
    Since $\tilde{x}_k \in V^{-1}([0, \epsilon])$ implies $\tilde{x}_{k+1} \in S_\delta$, we have $V(\tilde{x}_{k+1}) \geq 0$.
    Hence, $\| V(f(\tilde{x}_k, w_k)) \|^2 \leq \frac1c V(\tilde{x}_k) = \frac1c V(f(\tilde{x}_k, w_k))$.
    From Assumption~\ref{assum:main_converge}, $\| \nabla V(f(\tilde{x}_k, w_k)) \|^2 \geq b V(f(\tilde{x}_k, w_k))$, hence $b \leq \frac1c$, or $1 - b c \geq 0$.
    Then
    \begin{align*}
        V(\tilde{x}_{k+1}) \leq V(\tilde{x}_k) - c \| \nabla V(f(\tilde{x}_k, w_k)) \|^2 \leq V(\tilde{x}_k) - bc V(f(\tilde{x}_k, w_k)) = V(\tilde{x}_k) - bc V(\tilde{x}_k) = (1 - bc) V(\tilde{x}_k).
    \end{align*}
    Since $|1-bc| < 1$, $V(\tilde{x}_k) \to 0$ as $t \to \infty$ exponentially fast. Hence, $V^{-1}(0)$ is exponentially stable with respect to the dynamics in system~\eqref{eq:cor_extension}.
\end{proof}

This theorem shows that by a suitable choice of $\alpha$, a discrete time system can be modified such that a compact set defined as a zero set of a $C^2$ function is a local attractor of the system.
Moreover, the convergence rate is exponential.
The above theorem is also applicable when the system is autonomous, with a similar proof.
The following corollary presents the conditions for the same.

\begin{corollary}\label{cor:main} 
    Consider a discrete time dynamical system on the subset $U \subset \R^n$
    \begin{align*}
        x_{k+1} = f(x_k), \quad x_k \in U,
    \end{align*}
    where $f$ is a $C^1$ map from $U$ to itself. We make the following assumptions:
    \begin{enumerate}[label=CA\arabic*]
        \item \label{assum:cor_dyn_inv} There is a $C^2$ function $V:U \to \R_{\geq 0}$ such that $\Vinv(0)$ is compact and nonempty, and $V \circ f = V$ on $U$.
        \item \label{assum:cor_converge} There exist $b, \epsilon > 0$ such that $\Vinv([0, \epsilon])$ is compact and $\left\| \nabla V(x) \right\|^2 \geq b V(x)$ for all $x \in \Vinv([0, \epsilon])$.
        \item \label{assum:cor_Sdelta} There is a $\delta > 0$ such that the set $S_\delta = \{ x \in \R^n \mid \dist{\{x\}}{\Vinv([0,\epsilon])} < \delta \} \subset U$.
    \end{enumerate}
    Then, the modified dynamical system on $\Vinv([0, \epsilon])$,
    \begin{align}
        \tilde{x}_{k+1} = f(\tilde{x}_k) - \alpha\nabla V(f(\tilde{x}_k)), \label{eq:main_extension}
    \end{align}
    converges to the set $\Vinv(0)$ exponentially fast for $\alpha$ chosen as specified in Theorem~\ref{theo:main}.
\end{corollary}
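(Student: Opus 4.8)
The plan is to deduce the corollary directly from Theorem~\ref{theo:main} by exhibiting the autonomous system as a special case of the controlled system already treated. First I would fix an arbitrary one-point input set, say $W = \{w_0\} \subset \R^m$, and define $\bar f : U \times W \to U$ by $\bar f(x, w) = f(x)$. Since $f$ is $C^1$ on $U$ and $\bar f$ is constant in its second argument, $\bar f$ is a $C^1$ map from $U \times W$ to $U$, so the regularity hypothesis of Theorem~\ref{theo:main} is met. The autonomous recursion $x_{k+1} = f(x_k)$ is then literally $x_{k+1} = \bar f(x_k, w_k)$ for any choice of the (now irrelevant) sequence $w_k \in W$.

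Next I would verify that CA1--CA3 imply A1--A3 for the augmented system. Assumption A1 asks for a $C^2$ function $V$ with $\Vinv(0)$ compact and nonempty and $V \circ \bar f(x, w) = V(x)$ for all $x$ and $w$; since $\bar f(x, w) = f(x)$, this reduces to $V \circ f = V$ on $U$, which is precisely CA1. Assumptions A2 and A3 involve no input at all and coincide verbatim with CA2 and CA3. Hence every hypothesis of Theorem~\ref{theo:main} holds with the same $V$, $b$, $\epsilon$, and $\delta$.

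Finally I would match the two modified dynamics. The extension of Theorem~\ref{theo:main} reads $\tilde x_{k+1} = \bar f(\tilde x_k, w_k) - \alpha \nabla V(\bar f(\tilde x_k, w_k))$, and substituting $\bar f(\tilde x_k, w_k) = f(\tilde x_k)$ gives exactly \eqref{eq:main_extension}. The admissible range for the gain depends on $V$ only through the quantities $L$, $d$, and $\delta$ associated with $\Vinv([0,\epsilon])$ and $S_\delta$, none of which reference the input, so the bound $0 < \alpha < \min(2/d,\, \delta/L)$ is unchanged. Invoking Theorem~\ref{theo:main} then yields exponential convergence of $\tilde x_k$ to $\Vinv(0)$, which is the assertion.

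There is essentially no analytic obstacle here, since the entire content of the corollary is subsumed by the theorem; the only point requiring care is the bookkeeping of the reduction, namely confirming that embedding the autonomous map as an input-independent controlled map preserves both the $C^1$ regularity and the invariance identity in A1, and that the modified system is genuinely \eqref{eq:main_extension} rather than a cosmetically different one. As an equivalent alternative, one could simply rerun the proof of Theorem~\ref{theo:main} with every occurrence of $w_k$ suppressed: the Taylor remainder argument, the descent inequality $V(\tilde x_{k+1}) \le (1 - bc)\, V(\tilde x_k)$ with $c = \alpha - \alpha^2 d/2$, and the contraction estimate $|1 - bc| < 1$ all carry over without modification.
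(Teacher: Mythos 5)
Your reduction is correct and matches the paper's intent exactly: the paper simply states that the proof follows from Theorem~\ref{theo:main} and omits it, and your singleton-input construction ($W=\{w_0\}$, $\bar f(x,w)=f(x)$) is the natural way to make that deduction precise, with the hypotheses CA1--CA3 and the modified dynamics \eqref{eq:main_extension} lining up verbatim with A1--A3 and \eqref{eq:cor_extension}. Nothing further is needed.
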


\begin{proof}
    The proof follows from that of Theorem \ref{theo:main} and is omitted here.
\end{proof}

The above theorem can be used to extend systems evolving on compact manifolds to ambient Euclidean spaces using manifold embedding theorems.
This also shows the permissiveness of the assumptions on $V$. 
Since many systems, like the rigid body system used in quadcopter and satellite modelling, evolve on the special orthogonal group, the proposed theorem can be used for a huge class of problems.
Moreover, we here show a simple way to derive the function $V$ from the manifold.
The specified choice of $V$ and $U$ given in the proof of the theorem enable us to choose $\epsilon$ and $\alpha$ such that the assumptions of Theorem~\ref{theo:main} are satisfied.

\begin{theorem}\label{theo:manifold}
    Consider a system evolving on the compact manifold $M := g^{-1}(0)$ given by the difference equation
    \begin{align*}
        x_{k+1} = f(x_k, w_k), \quad x_k \in M
    \end{align*}
    where $f$ is a $C^1$ map from $M$ to itself, and $g: \R^n \to \R^k$ is a $C^2$ function.
    Consider the modified dynamical system
    \begin{align*}
        \tilde{x}_{k+1} = f(\tilde{x}_k) - \alpha D g(f(\tilde{x}_k))^T g(\tilde{x}_k) , \quad \tilde{x}_k \in g^{-1}(B_\epsilon(0)),
    \end{align*}
    where $\epsilon$ is such that $g^{-1}(B_\epsilon(0))$ is compact.
    The modified dynamical system converges to the manifold $M$ if:
    \begin{enumerate}[label=A\arabic*', leftmargin=2.5em]
        \item \label{assum:manifold_dyn_inv} For some $\varepsilon > \epsilon$ such that $g^{-1}(B_\varepsilon(0))$ is connected, $g(f(x)) = g(x)$ for all $x \in g^{-1}(B_\varepsilon(0))$.
        \item \label{assum:manifold_converge} For all $x \in g^{-1}(B_\epsilon(0))$, there exists a $b > 0$ such that $\sigma_{min}(D g(x)) \geq b$.
        \item \label{assum:manifold_bounded} For all $x \in g^{-1}(B_\varepsilon(0))$, $\sigma_{max} (D g(x))$ and ${\lambda_{max}(D^2 g_i(x)), \; i = 1, \ldots, k}$ are bounded, where $g_i(x)$ is the $i^{th}$ component of $g(x)$.
    \end{enumerate}
    Here, $\alpha$ is chosen such that
    \begin{align*}
    0 < \alpha < \min\left(\frac2d, \frac{2\delta}{L \epsilon^2}\right),
    \end{align*}
    where $L = \sigma_{max}(Dg(x))$ for all $x \in g^{-1}(B_\epsilon(0))$, $\delta > 0$ such that $\delta < \dist{g^{-1}(bd(B_\varepsilon(0)))}{g^{-1}(B_\epsilon(0))}$ and $d > 0$ is such that $d > \lambda_{max}(D g(x)^T D g(x) + \sum_i g_i(x) D^2 g_i(x))$ for all $x \in g^{-1}(B_\varepsilon(0))$.
\end{theorem}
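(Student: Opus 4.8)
The plan is to deduce this theorem from Theorem~\ref{theo:main} (in its autonomous form, Corollary~\ref{cor:main}) by exhibiting an explicit Lyapunov-type function built from $g$. Specifically, I would set
\[
V(x) := \tfrac12 \, g(x)^T g(x) = \tfrac12 \| g(x) \|^2, \qquad U := g^{-1}(B_\varepsilon(0)),
\]
so that $V$ is $C^2$ (since $g$ is $C^2$) and $\Vinv(0) = g^{-1}(0) = M$. A direct differentiation gives $\nabla V(x) = Dg(x)^T g(x)$ and the Hessian $D^2 V(x) = Dg(x)^T Dg(x) + \sum_i g_i(x) D^2 g_i(x)$. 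Because Assumption~\ref{assum:manifold_dyn_inv} gives $g(f(x)) = g(x)$ on $U$, we have $Dg(f(\tilde x_k))^T g(\tilde x_k) = Dg(f(\tilde x_k))^T g(f(\tilde x_k)) = \nabla V(f(\tilde x_k))$, so the modified system stated here is exactly the extension \eqref{eq:main_extension} associated with this $V$. The remaining work is to check that $V$ and $U$ satisfy \ref{assum:cor_dyn_inv}--\ref{assum:cor_Sdelta} and that the stated constants are the ones produced by Corollary~\ref{cor:main}.

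For \ref{assum:cor_dyn_inv}: $\Vinv(0) = M$ is compact and nonempty by hypothesis, and $V \circ f = V$ on $U$ follows from $\|g(f(x))\| = \|g(x)\|$. The key analytic step is \ref{assum:cor_converge}, where I would use Assumption~\ref{assum:manifold_converge}: since $\sigma_{min}(Dg(x)) \geq b > 0$ forces $Dg(x)^T$ to have trivial kernel, the lower singular-value bound gives
\[
\| \nabla V(x) \|^2 = \| Dg(x)^T g(x) \|^2 \geq \sigma_{min}(Dg(x))^2 \, \| g(x) \|^2 \geq b^2 \| g(x) \|^2 = 2 b^2 V(x),
\]
so \ref{assum:cor_converge} holds with the constant $2b^2$ in place of $b$. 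This is the one place where the full-rank hypothesis on $Dg$ is genuinely used.

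For \ref{assum:cor_Sdelta} and the constant $d$: the sets $g^{-1}(\overline{B_\epsilon(0)})$ and $g^{-1}(bd(B_\varepsilon(0)))$ are disjoint and the former is compact, so their distance is strictly positive; choosing $\delta$ below this distance and running a straight-line/intermediate-value argument for $\|g\|$ along segments shows $S_\delta \subset U$, which is \ref{assum:cor_Sdelta}. Assumption~\ref{assum:manifold_bounded} bounds $\sigma_{max}(Dg)$ and each $\lambda_{max}(D^2 g_i)$ on $U$, hence bounds $\lambda_{max}(D^2 V(x))$ via the Hessian formula above, producing the admissible $d$. On the compact sublevel set $\Vinv([0,\epsilon^2/2]) = g^{-1}(\overline{B_\epsilon(0)})$ one then estimates $\|\nabla V(x)\| \leq \sigma_{max}(Dg(x)) \|g(x)\| \leq L \epsilon$, which is the gradient bound entering the step-size condition.

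With all hypotheses verified, Corollary~\ref{cor:main} yields $V(\tilde x_k) \to 0$ exponentially, and since $\|g(\tilde x_k)\|^2 = 2 V(\tilde x_k)$ this is precisely exponential convergence of $\tilde x_k$ to $M = g^{-1}(0)$. The stated range for $\alpha$ is obtained by substituting the translated constants ($2b^2$, $d$, and the gradient bound of order $L\epsilon$) into the condition $0 < \alpha < \min(2/d, \delta/L_V)$ of Theorem~\ref{theo:main}. I expect the main obstacle to be bookkeeping rather than conceptual: one must track the relation $\epsilon^2/2$ between the manifold-theorem radius $\epsilon$ and the sublevel-set parameter of Theorem~\ref{theo:main}, handle the open/closed-ball distinction in the domain, and verify that the conservative gradient estimate reproduces the claimed $\alpha$-bound; the only step carrying real content is the singular-value inequality establishing \ref{assum:cor_converge}.
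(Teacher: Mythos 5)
Your proposal is correct and follows essentially the same route as the paper's own proof: the identical choice $V(x) = \frac12\|g(x)\|^2$ with $U = g^{-1}(B_\varepsilon(0))$, the same singular-value inequality to verify Assumption~\ref{assum:main_converge}, the same intermediate-value argument for $S_\delta \subset U$, and the same use of Assumption~\ref{assum:manifold_bounded} to produce $d$. The one substantive divergence is the final gradient estimate, where your bound $\|\nabla V(x)\| \leq \sigma_{max}(Dg(x))\|g(x)\| \leq L\epsilon$ on $g^{-1}(B_\epsilon(0))$ is the correct one, whereas the paper's chain $\frac12 L\epsilon^2 \geq \sigma_{max}(Dg(x))\|g(x)\|^2 \geq \|Dg(x)^T g(x)\|$ fails when $\|g(x)\| < 1$, so the admissible range for $\alpha$ should read $\delta/(L\epsilon)$ rather than the stated $2\delta/(L\epsilon^2)$.
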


\begin{proof} 
    Choose
    \begin{align*}
        V(x) = \frac12 \| g(x) \|^2  \textrm{ and } U = g^{-1}(B_\varepsilon(0))
    \end{align*}
    such that
    \begin{align*}
        \nabla V(x) = D g(x)^T g(x) , \textrm{ and } D^2 V(x) = D g(x)^T D g(x) + \sum_i g_i(x) D^2 g_i(x).
    \end{align*}
    Then, our choice of $V$ along with Assumption~\ref{assum:manifold_dyn_inv} satisfies Assumption~\ref{assum:main_dyn_inv}.
    To see that Assumption~\ref{assum:main_converge} is satisfied, note that
    \begin{align*}
        \| D g(x)^T g(x) \|^2 \geq \sigma_{min}(D g(x))^2 g(x)^T g(x) \geq b^2 g^T(x) g(x)
    \end{align*}
    from Assumption~\ref{assum:manifold_converge}.

    We now show that our system satisfies Assumption~\ref{assum:main_Sdelta} of Theorem~\ref{theo:main}, i.e. $S_\delta \subset U$.
    For this system, the set $S_\delta$ is defined as $\{ x \in \R^n \mid \dist{\{x\}}{g^{-1}(B_\epsilon(0))} < \delta \}$.
    From the definition of $S_\delta$, if $x \in S_\delta$, then $x \in B_\delta(y)$ for some $y \in g^{-1}(B_\epsilon(0))$.
    Note that $V(y) \leq \frac12 \epsilon^2$.
    Let us assume that $x \notin U$.
    Then, $V(x) > \frac12 \varepsilon^2 > \frac12 \epsilon^2$.
    By the intermediate value theorem, from the continuity of $g$ and $V$, there exists a $z$ on the line joining $x$ and $y$ such that $V(z) = \frac12 \varepsilon^2$.
    This implies $z \in B_\delta(y) \subset S_\delta$.
    However, $\delta < \dist{g^{-1}(bd(B_\varepsilon(0)))}{g^{-1}(B_\epsilon(0))}$.
    This is a contradiction.
    Hence, $x \in U$, i.e. for any $x \in S_\delta$, $x \in U$.
    This shows that our system satisfies Assumption~\ref{assum:main_Sdelta}.

    In Theorem~\ref{theo:main}, the closure of the set $S_\delta$ is compact, hence the second derivative of $V$ is bounded.
    However, since here we do not assume the boundedness of the set $g^{-1}(B_\varepsilon(0))$, we must assume Assumption~\ref{assum:manifold_bounded} for existence of $d$ since
    \begin{align*}
        \lambda_{max} (D g(x)^T D g(x) + \sum_i g_i(x) D^2 g_i(x)) &\leq \lambda_{max} (D g(x)^T D g(x)) + \sum_i  g_i(x) \lambda_{max}(D^2 g_i(x))\\
        &= \sigma_{max} (D g(x)) + \sum_i g_i(x) \lambda_{max}(D^2 g_i(x)),
    \end{align*}
    which is bounded due to Assumption~\ref{assum:manifold_bounded} and boundedness of $g(x)$.
    To calculate the bounds of $\alpha$, note that
    \begin{align*}
        \frac12 L \epsilon^2 &\geq \sigma_{max}(Dg(x)) \| g(x) \|^2 \geq \| Dg(x)^T g(x) \|
    \end{align*}
    for all $x \in g^{-1}(B_\epsilon(0))$.
\end{proof} 

The manifold extension formulation is beneficial when tools developed for Euclidean spaces are to be used for systems on manifolds.
Usually, study of systems evolving on manifolds require the notion of distance defined on the manifold, and an ``error term'' to be suitably defined using this distance.
An alternative method would be embedding the manifold into ambient Euclidean spaces, and using the subtraction operator defined on the Euclidean space.
However, when using this method, the manifold may not be forward invariant, i.e. the state of the system may diverge from the manifold.
A transversally exponentially stable term will lead to the system state to converge to the manifold, and thus lead to improved results.
An example of this is given in subsection~\ref{subsec:sim_obs_h1}, where we design a Luenberger observer for a system evolving on $S^3$ and compare two observers, one with the extension term and one without the extension term.
We consider randomly chosen initial conditions and show that in all cases, the observer with the transversally stable extension term estimates the state of the system much better as compared to the one without the extension term.

The manifold formulation also shows that although the restrictions on the choice of $V$ exist, they are permissive enough that for many systems, a suitable $V$ can be chosen.
Moreover, the choice of $V$ may be straightforward in a lot of cases, as shown by the choice in Theorem~\ref{theo:manifold}.

\section{Simulation}
We now present simulations for showing the effectiveness of this technique.
To show convergence properties, we perform Monte Carlo simulations.
The simulation is that of a rigid body rotation system where the state is defined using quaternions.
This system evolves on $S^3$, the unit sphere in the quaternion space.
We show the convergence of the system state of the quaternion to $S^3$.
Lastly, to show implementation in real world systems, we show the effect the addition of the extension term leads to in the case of a Luenberger like observer.

\subsection{Simulation on the unit sphere in quaternion space}\label{subsec:sim_H1}
The algebra of quaternions is denoted by $\H$, and let $S^3$ denotes the set of unit quaternions in $\H$.
A vector quaternion is a quaternion whose real part is $0$.
The quaternion $[1, 0, 0, 0]$ is denoted by $1$.
The conjugate quaternion of $q$ is denoted by $\bar{q}$.
The norm of the quaternion $q$ is denoted by $\| q \| = \sqrt{\bar{q} q}$.

We consider the system equations
\begin{align}\label{sys:sys_H1}
    q_{k+1} = q_k \exp(\Omega_k/2),
\end{align}
where $q_k \in S^3$ denotes the rotational state of the body, and $\Omega_k \in \H$ is a vector quaternion denoting the angular rotation of the body.

Define
\begin{align*}
    V(x) = \| \bar{x} x - 1 \|^2, \quad x \in \H, \textrm{ and } U = \H.
\end{align*}
Using the above choice of $V$, we get that 
\begin{align*}
    \nabla V(x) = 4 x (\bar{x} x - 1) \textrm { and } \lambda_{max}(D^2 V(x)) = 12 x^T x.
\end{align*}
Note that the assumptions of Theorem~\ref{theo:main} hold here, since:
\begin{itemize}
    \item $V(f(x_k, \Omega_k)) = \| \exp(-\Omega_k/2) \bar{x} x \exp(\Omega_k/2) - 1 \|^2 = \| \bar{x} x - 1 \|^2 = V(x)$, and $V^{-1}(0) = \{ x \mid \bar{x} x = 1 \}$ is compact and nonempty.
    \item For any $\epsilon < 1$, $V^{-1}([0, \epsilon]) = \{ x \mid 1 - \sqrt{\epsilon} \leq \bar{x} x \leq 1 + \sqrt{\epsilon} \}$ is compact, and choosing $b = 16 (1 - \sqrt{\epsilon})$,
        \begin{align*}
            \| \nabla V(x) \|^2 = 16 \| x \|^2 V(x) \geq b V(x).
        \end{align*}
\end{itemize}
Then, the system equations are modified using the proposed extension technique as
\begin{align}\label{sys:sys_H1_ext}
    \tilde{q}_{k+1} = \tilde{q}_k \exp(\Omega_k/2) - 4 \alpha \tilde{q}_k (\bar{\tilde{q}}_k \tilde{q}_k - 1) \exp(\Omega_k/2)
\end{align}
where $\tilde{q}_k \in \H$ is the state of the quaternion.

For the simulation, consider $\epsilon = 0.5$, $\delta = 0.059$, such that $L = 3.69$ and $d = 15.87$.
Then $\alpha < 0.016$.
Specifically, we choose $\alpha = 0.01$.
The angular rotation of the system is chosen uniformly between $(0, 0, 0, 0)$ and $(0, 0.5, 0.5, 0.5)$.
For showing the convergence of the state of the system to the manifold $S^3$, we perform simulation with the initial condition given by $[w, 0, 0, 0]$, where $w$ is a random number chosen uniformly between $\sqrt{1 - \sqrt{\epsilon}}$ and $\sqrt{1 + \sqrt{\epsilon}}$.
We choose 1000 such initial conditions and run each simulation for 100 epochs.

\begin{figure}[!t]
    \centering
    \resizebox{0.6\linewidth}{!}{\input{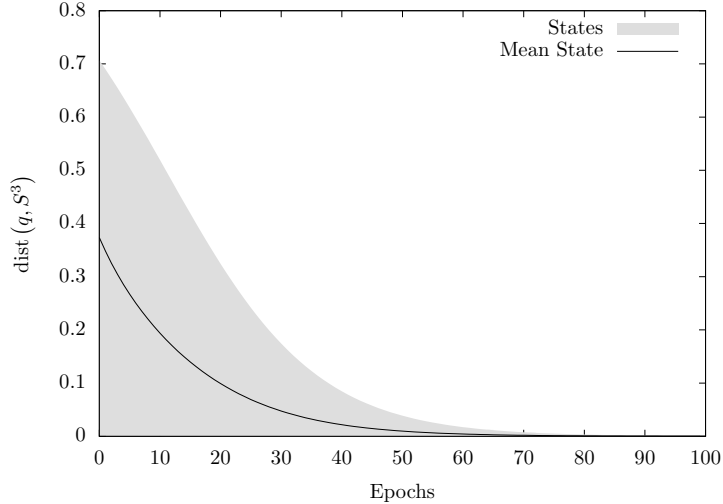}}
    \caption{Convergence of the state of the quaternion system to the manifold $S^3$}
    \label{fig:quat_conv_h1}
\end{figure}

\begin{figure}[t]
    \centering
    \hfill
    \subfloat[][Convergence of the mean of states of observers to the system state]{\resizebox{0.46\linewidth}{!}{\input{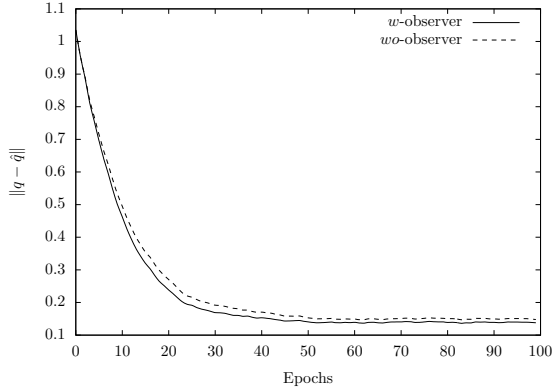}}}\hfill
    \subfloat[][Convergence of observer state with maximum error from the system state to the system state]{\resizebox{0.46\linewidth}{!}{\input{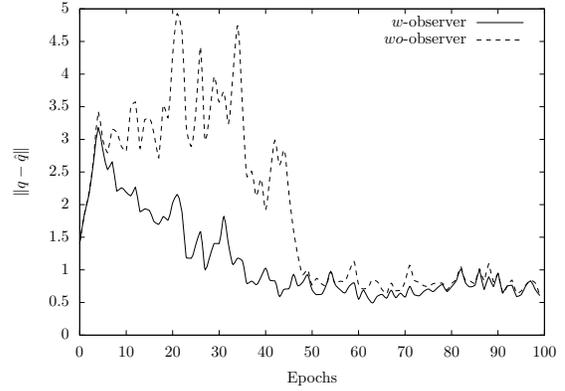}}}\hfill\\\hfill
    \subfloat[][Convergence of the mean of the states to the manifold]{\resizebox{0.46\linewidth}{!}{\input{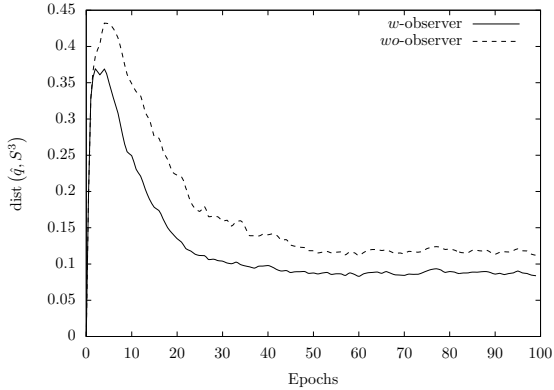}}}\hfill
    \subfloat[][Convergence of the observer state farthest from the manifold to the manifold]{\resizebox{0.46\linewidth}{!}{\input{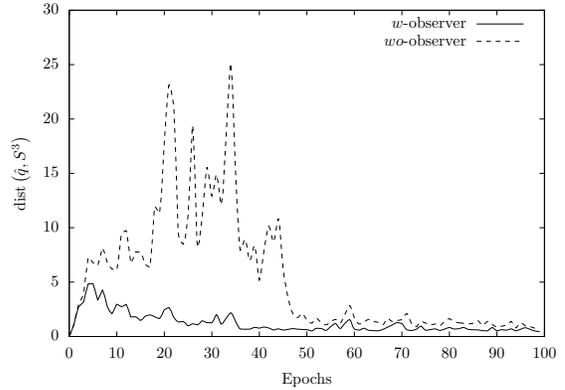}}}\hfill
    \caption{Convergence of the state of the quaternion system to the manifold $S^3$}
    \label{fig:quat_lu}
\end{figure}

The distance of the state from the manifold $S^3$ can be seen in Figure~\ref{fig:quat_conv_h1}.
The black line shows the mean of all the states of the system.
It can be seen from the simulation that the mean of the states of the system converges to the underlying manifold exponentially fast.
This can be seen for all the states starting in $\Vinv([0, \epsilon])$.
The gray shaded region shows the spread of the states, which is highest at the initial epoch due to the randomly chosen initial states.
The spread also exponentially descreases, implying that all the systems eventually converge to the manifold.

\subsection{Simulation of observer on $S^3$}\label{subsec:sim_obs_h1}
We show the effect of using such an additive term in designing observers for systems.
A formal proof and the study of the convergence properties along with comparisons would be a complete work on its own, hence we only show Monte Carlo simulations here.
We consider the system defined in subsection~\ref{subsec:sim_H1}.

We first note that to design an observer for this system on the manifold, a notion of distance on the manifold between two elements of the manifold is to be defined.
An observer then needs to be designed such that this distance tends to $0$ as $t \to \infty$.
We sidestep this requirement by considering the manifold to be embedded in the ambient Euclidean space, and using the Euclidean distance as the distance between two point on the manifold.
However, this formulation will push the state of the system outside the manifold, whereas a desirable property is to restrict the state to the manifold.
To remedy this, the system~\eqref{sys:sys_H1} is extended to the ambient Euclidean space as shown in equation~\eqref{sys:sys_H1_ext}.
To show the improvement due to the extension term, the performance of an observer designed using the modified system~\eqref{sys:sys_H1_ext} is compared with an observer designed using the system \eqref{sys:sys_H1} embedded into $\H$ without the extension term.

Motivated by a Luenberger like observer system, the observer equations are considered as
\begin{align*}
    \hat{q}_{w, k+1} &= \hat{q}_{w, k} \exp(\Omega_k/2) + (q_k - \hat{q}_{w, k}) L - 4 \alpha \hat{q}_{w, k} (\bar{\hat{q}}_{w, k} \hat{q}_{w, k} - 1) \exp(\Omega_k/2),\\
    \hat{q}_{wo, k+1} &= \hat{q}_{wo, k} \exp(\Omega_k/2) + (q_k - \hat{q}_{wo, k}) L,
\end{align*}
where the subscript $w$ denotes the observer with the transversally stable term and the subscript $wo$ denotes the observer without this term.
Here $q_k$ denotes the measured state of the system evolving using the difference equation~\eqref{sys:sys_H1}.
For the sake of brevity, we use the term $w$-observer and $wo$-observer for systems containing the transversally stable term and for those without the transversally stable term.

The initial condition for the system state is chosen as a random state in $S^3$.
The initial condition for the observer states are chosen as $1$.
We run 1000 such simulations with each simulation run for 100 epochs.
The state of the system at epoch $k$ is measured as the true state of the system multiplied by $\exp(\nu_k)$ where $\nu_k$ is Gaussian noise with mean $0$ and standard deviation $0.1$.
The angular velocity of the system is chosen as $(0, \omega_x, \omega_y, \omega_z)$ such that $0 < \omega_x, \omega_y, \omega_z < 1$ with the time step between two epochs as $10$, such that $\Omega_k$ is of the form $(0, \theta_x, \theta_y, \theta_z)$ such that $0 < \theta_x, \theta_y, \theta_z < 10$.
We choose $L = (0.11, -0.18, -0.18, -0.18)$ as the mean of $\exp(\Omega_k/2)$ since there exists no choice of $L$ such that $\| \exp(\Omega_k/2) - L \| < 1$ for all $\Omega_k$.
The constants $\epsilon, \delta$ and $\alpha$ are chosen as in subsection \ref{subsec:sim_H1}.

The convergence of the observers to the system state is shown in Figure~\ref{fig:quat_lu}.
Here, mean state denotes the mean of all the states of the system.
The maximum error in the observer states, minimum error in the observer states (both calculated for all $k > 25$ and among all simulations) and the final error mean and standard deviation are shown in Table~\ref{tab:errs}.
We note that while both the observers converge to the system state, the mean of the error of the $w$-observer is less than that of the $wo$-observer.
This difference in mean is significant in case of the distance from the $S^3$ manifold.
If the observer state is intended to be restricted to $S^3$, by projecting the unrestricted observer state to $S^3$, the error in the state of the observer from the manifold also contributes to the error in the observer.
Moreover, the maximum error in case of the $w$-observer is also considerably less than that of the $wo$-observer, implying that the variation of the error in case of the $w$-observer is also lower than that of the $wo$-observer.
This difference in variation can be seen in the standard deviation of the observers, where the standard deviation of the distance of the $wo$-observer state from the manifold is 1.5 times that of the $w$-observer.
This simulation shows the benefits of extending the system to the ambient space using a transversally stable term before designing the observer system.

\begin{table*}[t]
    \centering
    \caption{\label{tab:errs} Error in the observer states for Luenberger observer system.}
    \begin{tabular}{*{9}{c}}
        \toprule
        \multirow{2}{*}{Observer} & \multicolumn{4}{c}{$\| q - \hat{q} \|$} & \multicolumn{4}{c}{$\dist{q}{S^3}$} \\
        \cmidrule(lr){2-5} \cmidrule(lr){6-9}
        & Mean & Std & Max & Min & Mean & Std & Max & Min\\
        \midrule
            $w$-observer & 0.138 &  0.07  &  1.816 &  0.007 &  0.084 &  0.072 &  2.182 &  0. \\
            $wo$-observer & 0.148 &  0.076 &  4.747 &  0.007 &  0.112 &  0.101 & 25.121 &  0. \\
        \bottomrule
    \end{tabular}
\end{table*}


\section{Conclusion}
In this paper, we extend the ambient space transversally stable extension technique\cite{ChangJP2016} to discrete time systems.
The use of transversally stable extension has been used to improve existing results in control theory in the case of continuous time systems.
We expect that the provided formulation in Theorem~\ref{theo:main} enables such improvement in the case of discrete time systems.
An example of improvements afforded by such a formulation has been shown in subsection~\ref{subsec:sim_obs_h1}.
A quantitative analysis of such improvement for different problems in control theory and their comparison with corresponding manifold respecting formulations would be a future area of research.

\bibliographystyle{plain}
\bibliography{References.bib}

\begin{thebibliography}{10}

\bibitem{BuC2022}
Fanchen Bu and Dong~Eui Chang.
\newblock Feedback gradient descent: Efficient and stable optimization with
  orthogonality for {{DNNs}}.
\newblock {\em Proceedings of the AAAI Conference on Artificial Intelligence},
  36(6):6106--6114, June 2022.

\bibitem{Chang2018}
Dong~Eui Chang.
\newblock On controller design for systems on manifolds in {{Euclidean}} space.
\newblock {\em International Journal of Robust and Nonlinear Control},
  28(16):4981--4998, 2018.

\bibitem{ChangJP2016}
Dong~Eui Chang, Fernando Jim{\'e}nez, and Matthew Perlmutter.
\newblock Feedback integrators.
\newblock {\em Journal of Nonlinear Science}, 26(6):1693--1721, December 2016.

\bibitem{ChangP2019}
Dong~Eui Chang and Matthew Perlmutter.
\newblock Feedback integrators for nonholonomic mechanical systems.
\newblock {\em Journal of Nonlinear Science}, 29(3):1165--1204, June 2019.

\bibitem{ChangPV2022}
Dong~Eui Chang, Matthew Perlmutter, and Joris Vankerschaver.
\newblock Feedback integrators for mechanical systems with holonomic
  constraints.
\newblock {\em Sensors}, 22(17):6487, January 2022.

\bibitem{ChangPC2020}
Dong~Eui Chang, Karmvir~Singh Phogat, and Jongeun Choi.
\newblock Model predictive tracking control for invariant systems on matrix
  {{Lie}} groups via stable embedding into {{Euclidean}} spaces.
\newblock {\em IEEE Transactions on Automatic Control}, 65(7):3191--3198, July
  2020.

\bibitem{FliegKN1996}
T.~Fliegner, {\"U}.~Kotta, and H.~Nijmeijer.
\newblock Solvability and {{Right-Inversion}} of {{Implicit Nonlinear
  Discrete-Time Systems}}.
\newblock {\em SIAM Journal on Control and Optimization}, 34(6):2092--2115,
  November 1996.

\bibitem{JulieU2004}
S.J. Julier and J.K. Uhlmann.
\newblock Unscented filtering and nonlinear estimation.
\newblock {\em Proceedings of the IEEE}, 92(3):401--422, March 2004.

\bibitem{Kalma1960}
R.~E. Kalman.
\newblock A {{New Approach}} to {{Linear Filtering}} and {{Prediction
  Problems}}.
\newblock {\em Journal of Basic Engineering}, 82(1):35--45, March 1960.

\bibitem{KalmaB1961}
Rudolf~Emil Kalman and Richard~S. Bucy.
\newblock New results in linear filtering and prediction theory.
\newblock {\em Journal of Basic Engineering}, 83(1):95--108, March 1961.

\bibitem{K.c.S2023}
Tejaswi K.C. and Sukumar Srikant.
\newblock Attitude control via a feedback integrator based observer.
\newblock {\em Automatica}, 151:110882, May 2023.

\bibitem{Kotta1995}
{\"U}lle Kotta.
\newblock {\em Inversion {{Method}} in the {{Discrete-time Nonlinear Control
  Systems Synthesis Problems}}}, volume 205 of {\em Lecture {{Notes}} in
  {{Control}} and {{Information Sciences}}}.
\newblock {Springer}, {Berlin, Heidelberg}, 1995.

\bibitem{Luenb1971}
D.~Luenberger.
\newblock An introduction to observers.
\newblock {\em IEEE Transactions on Automatic Control}, 16(6):596--602,
  December 1971.

\bibitem{MoraaG1995}
P.E. Moraal and J.W. Grizzle.
\newblock Observer design for nonlinear systems with discrete-time
  measurements.
\newblock {\em IEEE Transactions on Automatic Control}, 40(3):395--404, March
  1995.

\bibitem{ParkPKC2021}
Jae~Hyeon Park, Karmvir~Singh Phogat, Whimin Kim, and Dong~Eui Chang.
\newblock Transversely stable extended {{Kalman}} filters for systems on
  manifolds in {{Euclidean}} spaces.
\newblock {\em Journal of Dynamic Systems, Measurement, and Control}, 143(6),
  February 2021.

\bibitem{ShanbC2022a}
Soham Shanbhag and Dong~Eui Chang.
\newblock Globally exponentially convergent observer for the rigid body system
  on {{SE}}(3).
\newblock In {\em 2022 {{IEEE}} 61st {{Conference}} on {{Decision}} and
  {{Control}} ({{CDC}})}, pages 1257--1262. 2022 {{IEEE}} 61st {{Conference}}
  on {{Decision}} and {{Control}} ({{CDC}}), December 2022.

\end{thebibliography}

\end{document}